
\documentclass[journal]{IEEEtran}
\usepackage{amsfonts}
\usepackage{amsmath}
% If IEEEtran.cls has not been installed into the LaTeX system files,
% manually specify the path to it like:
% \documentclass[journal]{../sty/IEEEtran}
\newtheorem{lemma}{Lemma}

\ifCLASSINFOpdf
  % \usepackage[pdftex]{graphicx}
  % declare the path(s) where your graphic files are
  % \graphicspath{{../pdf/}{../jpeg/}}
  % and their extensions so you won't have to specify these with
  % every instance of \includegraphics
  % \DeclareGraphicsExtensions{.pdf,.jpeg,.png}
\else
  % or other class option (dvipsone, dvipdf, if not using dvips). graphicx
  % will default to the driver specified in the system graphics.cfg if no
  % driver is specified.
  % \usepackage[dvips]{graphicx}
  % declare the path(s) where your graphic files are
  % \graphicspath{{../eps/}}
  % and their extensions so you won't have to specify these with
  % every instance of \includegraphics
  % \DeclareGraphicsExtensions{.eps}
\fi
\hyphenation{op-tical net-works semi-conduc-tor}

\begin{document}
%
% paper title
% can use linebreaks \\ within to get better formatting as desired
\title{The rates of convergence for generalized  entropy of  the normalized
sums of IID random variables}
%
%
% author names and IEEE memberships
% note positions of commas and nonbreaking spaces ( ~ ) LaTeX will not break
% a structure at a ~ so this keeps an author's name from being broken across
% two lines.
% use \thanks{} to gain access to the first footnote area
% a separate \thanks must be used for each paragraph as LaTeX2e's \thanks
% was not built to handle multiple paragraphs
%

\author{ Hongfei~Cui,~\IEEEmembership{}
       Jianqiang~Sun~\IEEEmembership{}
        and~Yiming~Ding ~\IEEEmembership{}

        % <-this % stops a space
\thanks {Hongfei~Cui is with Wuhan Institute of Physics and Mathematics, The Chinese Academy of Sciences, Wuhan 430071, China
 (e-mail:
cui@wipm.ac.cn). Jianqiang~Sun is with Wuhan Institute of Physics
and Mathematics, The Chinese Academy of Sciences, Wuhan 430071,
China, and Graduate School of the Chinese Academy of Science
(e-mail: sunjianqiang09@mails.gucas.ac.cn).
  Yiming~Ding  is with Wuhan Institute of Physics and Mathematics,
  The Chinese Academy of Sciences, Wuhan 430071, China
  (e-mail:ding@wipm.ac.cn). This work was partially supported by National Basic Research
  Program of China (973 Program) Grant No. 2011CB707802.
 }}% <-this % stops a space
\maketitle

\begin{abstract}
%\boldmath

We consider the generalized differential entropy of normalized sums
of independent and identically distributed (IID) continuous random
variables. We prove that the R\'{e}nyi entropy and Tsallis entropy
of order $\alpha\ (\alpha>0)$ of the normalized sum of IID
continuous random variables with bounded moments are convergent to
the corresponding R\'{e}nyi entropy  and Tsallis entropy of the
Gaussian limit, and obtain sharp rates of convergence.
\end{abstract}
% IEEEtran.cls defaults to using nonbold math in the Abstract.
% This preserves the distinction between vectors and scalars. However,
% if the journal you are submitting to favors bold math in the abstract,
% then you can use LaTeX's standard command \boldmath at the very start
% of the abstract to achieve this. Many IEEE journals frown on math
% in the abstract anyway.

% Note that keywords are not normally used for peerreview papers.
\begin{IEEEkeywords}
R\'{e}nyi entropy, Shannon entropy, Tsallis entropy, central limit
theorem, rate of convergence.
\end{IEEEkeywords}

{\bf MSC2000 Classification:} Primary 94A17  Secondary 62B10.

% For peer review papers, you can put extra information on the cover
% page as needed:
% \ifCLASSOPTIONpeerreview
% \begin{center} \bfseries EDICS Category: 3-BBND \end{center}
% \fi
%
% For peerreview papers, this IEEEtran command inserts a page break and
% creates the second title. It will be ignored for other modes.
\IEEEpeerreviewmaketitle

\section{Introduction}
% The very first letter is a 2 line initial drop letter followed
% by the rest of the first word in caps.
%
% form to use if the first word consists of a single letter:
% \IEEEPARstart{A}{demo} file is ....
%
% form to use if you need the single drop letter followed by
% normal text (unknown if ever used by IEEE):
% \IEEEPARstart{A}{}demo file is ....
%
% Some journals put the first two words in caps:
% \IEEEPARstart{T}{his demo} file is ....
%
% Here we have the typical use of a "T" for an initial drop letter
% and "HIS" in caps to complete the first word.
\IEEEPARstart{T}{he} Shannon entropy of a random variable $X$ with
density $f:\mathcal {R} \longrightarrow [0 , \infty)$ is defined as
$$
H(X)=-\int_{\mathcal {R}}f\log f
$$
provided that the integral make sense \cite{SH}. (We use $\log$ to
represent the natural logarithm throughout this paper).
%For many random
%systems, Shannon entropy plays a fundamental role in the analysis of
%how many evolve towards an equilibrium.
It is interesting to study
the convergence of the normalized sums
$$S_n = \frac{1}{\sqrt n}\sum_{i=1}^{n}X_i$$
of independent copies $X$ to the Gaussian limit: the central limit
theorem for independent and identically distributed (IID) copies of
$X$. Without loss of generality, we suppose $\mathbb{E} (X)=0$ and $
\mathbb{E} (X^2)=1$. The standard Gaussian distribution is denoted
by $G$.

The idea of tracking the central limit theorem using Shannon entropy
goes back to Linnik \cite{LI}, who used it to give a particular
 proof of the central limit theorem. Barron \cite{BA} was
the first to prove a central limit theorem with convergence in the
Shannon entropy sense. He proved that $H(S_n)$ converges to
$H(G)=\log \sqrt{2\pi e}$ if $H(S_n)$ is finite for some $n>0$.
Notice that $H(S_n)$ is finite for some $n>0$ if  for some $s>0$,
$E|S_n|^{s}<\infty$, since
$$
H(S_n)\leq \frac{1}{s} \log [E|S_n|^{s} {(2\Gamma(\frac{1}{s}))}^s e
s^{1-s}],
$$
where $\Gamma$ is the Gamma function.  Artstein et al. \cite{ABBN2}
, Johnson  and Barron \cite{JB} obtained the rate of convergence
$$|H(S_n)-H(G)|=O(\frac{1}{n}),$$ provided the density of $X$
satisfies some analytical conditions \cite{J}. Moreover,  the
conclusion that $H(S_n)$ is increasing to $H(G)$ was obtained by
Artstein et al.\cite{ABBN1}, and some simpler proofs can be found in
Tulino and Verdu \cite{TV} and Madiman and Barron \cite{MB}.

R\'{e}nyi entropy  is a generalization of Shannon entropy \cite{RE}.
It is one of a family of functionals for
 quantifying the diversity, uncertainty or randomness of a system.
The R\'{e}nyi  entropy of order $\alpha$ $(\alpha\in \mathcal {R})$
is defined as
$$R_{\alpha}(X)= \frac{1}{1-\alpha}\log \int_\mathcal {R} {f^\alpha(x)} dx, \ \alpha \neq1 .$$
By L.Hoptital's rule, $R_{\alpha}(X) \to H(X)$ as $\alpha \to 1$.
The R\'{e}nyi entropy of order $2$, $R_2(X)$, is called Collision
entropy.

The R\'{e}nyi entropies are important in ecology and statistics as
indices of diversity. R\'{e}nyi  entropies  appear also in several
important contexts such as information theory, statistical
estimation, and quantum entanglement \cite{JV,LYG,ASW}. A class of
R\'{e}nyi entropy estimators for multidimensional densities are
given by Leonenko et al. \cite{LPS}.

Another generalization of Shannon entropy is Tsallis entropy
\cite{T}, which is defined as
$$T_{\alpha}(X)=\frac{1}{\alpha-1}(1-\int_\mathcal {R} f^{\alpha}(x)dx),\ \  \alpha\neq 1.$$
It is easy to see that $T_{\alpha}(X)\rightarrow H(X)$ as
$\alpha\rightarrow 1$.  Historically, this family of entropies was
derived by Havrda and Ch\'{a}rvat in 1967 \cite{HC}. The  R\'{e}nyi
entropy and Tsallis entropy with same order $\alpha$ are related by
\cite{NN}
$$
R_{\alpha}(X)=\frac{1}{1-\alpha} \log (1-(\alpha-1)T_{\alpha}(X)),
$$
which is a one-to-one correspondence between $R_{\alpha}(X)$ and
$T_{\alpha}(X)$. In fact, $R_{\alpha}(X)$ is a monotone increasing function of $T_{\alpha}(X)$,  because
$$
\frac{d R_{\alpha}(X)}{d
T_{\alpha}(X)}=\frac{1}{1-(\alpha-1)T_{\alpha}(X)}=\frac{1}{\int_\mathcal{R}f^{\alpha}(x)
dx}>0
$$
provided $\int_\mathcal{R}f^{\alpha}(x) dx<\infty$.

Parallel to the Shannon case, we consider the convergence of
$R_{\alpha}(S_n)$ and $T_{\alpha}(S_n)$ ($\alpha>0$), and
investigate the rates of convergence.

Our main results are the following Theorem. \vspace{0.5cm}

\textbf{Main Theorem} {\it Let $X_1,X_2,\ldots, X_n$ be independent
copies of a random variable $X$ with %probability density $f$ and
characteristic function $\varphi(t)$. %$f_n(x)$ be the density of the
%normalized sum $S_n$ and $g(x)$ be the density of the standard Gaussian distribution $G$.
Suppose the following hold:
\begin{enumerate}
\item $\mathbb{E}(|X|^k)$ is finite for $k=1,2,...$;
%\item $R_\alpha(S_n)$ exists for some $n>0$;
\item $|\varphi(t)|^\upsilon$ is integrable for some
$\upsilon \geq 1$.
\end{enumerate}
Then for any $\alpha>0$, we have
$$\lim_{n\to\infty}R_\alpha(S_n)=R_\alpha(G), \ \ \ \lim_{n\to\infty}T_\alpha(S_n)=T_\alpha(G).$$
Furthermore,
\begin{displaymath}|R_{\alpha}(S_n)-R_{\alpha}(G)|=
\begin{cases} O(n^{-\frac{1}{2}}) & 1<\alpha<\infty; \\
O(n^{-\frac{\alpha}{2}+\gamma}) & 0<\alpha \leq 1,
0<\gamma<\frac{\alpha}{2};
\end{cases}
\end{displaymath}
\begin{displaymath}|T_{\alpha}(S_n)-R_{\alpha}(G)|=
\begin{cases} O(n^{-\frac{1}{2}}) & 1<\alpha<\infty; \\
O(n^{-\frac{\alpha }{2}+\gamma}) & 0<\alpha \leq 1,
0<\gamma<\frac{\alpha}{2}.
\end{cases}
\end{displaymath}
\vspace{0.3cm}}

Although the  R\'{e}nyi entropy and Tsallis entropy can be defined
for any real number $\alpha$, we only consider the case $\alpha>0$.
In fact, for $\alpha \le 0$, one can check that
$R_{\alpha}(G)=\infty$ and $T_{\alpha}(G)=\infty$.

The bounded moment condition 1) is equivalent to
$\mathbb{E}(X^k)<+\infty$ for all positive integer $k$, which is
also equivalent to the fact that the characteristic function
$\varphi(t)$ admits derivatives of all orders at $t=0$. It is a
local condition imposed on $\varphi(t)$, while the integrability
condition 2) assumes a global property of $\varphi(t)$.

As we shall see in Lemma 5, the bounded moment condition 1) implies
the existence of $R_{\alpha}(S_n)$ and $T_{\alpha}(S_n)$ for all
$\alpha>0$. We shall obtain the rates of convergence via Feller's
expansion of densities. According to Feller's expansion \cite{FE} (Lemma 3),
the density of $S_n$ is convergent to the density of $G$ uniformly
with rate $o(n^{-1/2})$, where the integrability condition 2) is
assumed. Hence, the rates of convergence we obtained in Main Theorem
are sharp.

Since the moment condition is weaker than the Poincar\'{e}
inequality condition which was used in the Shannon case \cite{J},
the rate of convergence for Shannon entropy we obtained is
$O(n^{-1/2+\gamma})$ ($\gamma>0$ is small) rather than $O(n^{-1})$.

If the moment condition 1) in Main Theorem is replaced by
$\mathbb{E} |X|^3<\infty$, it is shown  that
$R_{\alpha}(S_n)\rightarrow R_{\alpha}(G)$ as $n\rightarrow+\infty$
for every $\alpha >1$, and rough rates of convergence are obtained
in \cite{CD}. \footnote{There is a small error in \cite{CD}, in
fact, the rates of convergence claimed in it should be divided by
$2$.}

\setcounter{equation}{0}
\section{Convergence of R\'{e}nyi entropy and Tsallis entropy}

Let $\{Y_n\}$ be a sequence of random variables with density
functions $\{p_n(x)\}$ and $Y$ be a random variable with density
function $p(x)$.  It is interesting to ask whether the R\'{e}nyi
entropy and Tsallis entropy of $\{Y_n\}$ of order $\alpha$
($\alpha>0$) are convergent the corresponding entropy of $Y$,
provided $Y_n \to Y$ in some sense. The following Theorem 1 claims
that if $\{p_n(x)\}$ is uniformly bounded, $\{p_n(x)\}$ is uniformly
convergent to $p(x)$, and the $L^{\alpha}$-norm of $\{p_n(x)\}$ is
uniform bounded for every $\alpha>0$, then the convergence results
hold. For discrete random variables, such kind of continuity is also
valid \cite{TMSA}.

 \vspace{0.3cm}

 \textbf{Theorem 1} {\it Let
$\{Y_n\}$ be a sequence of random variables with density functions
$\{p_n(x)\}$,  $Y$ be a random variable with density function
$p(x)$, and $\mathcal{A}$ be a subset of $\mathcal{R}$ with zero
Lebesgue measure. Suppose the following hold:
\begin{enumerate}

\item for any $\varepsilon>0$, there exists a positive integer  $N>0$ such that
$\underset{x\in\mathcal {R\backslash
A}}{\sup}\{|p_n(x)-p(x)|\}<\varepsilon$ for $n>N$;

\item there exists a finite number $M>0$ such that
 $p_n(x)\leq M$ uniformly in $x\in\mathcal {R\backslash A}$ and $n\in \mathbb{N}$;

\item for every $\alpha>0$, there exists a finite number
$M_{\alpha}>0$ such that $$\int_{\mathcal {R}}p_n^{\alpha}(x)dx\leq
M_{\alpha}$$ uniformly in $n\in \mathbb{N}$.
\end{enumerate}
Then for $\alpha>0$, we have
$$\lim_{n\to+\infty}R_\alpha(Y_n)=R_{\alpha}(Y), \ \ \ \ \ \lim_{ n\to+\infty}T_\alpha(Y_n)=T_{\alpha}(Y).$$
}

\textbf{Remark 1:}
\begin{enumerate}
\item It is interesting to note that one may use Theorem 1 to obtain the convergence results of  R\'{e}nyi entropy and Tsallis entropy
for random variables with correlations.

\item Theorem 1 claims the convergence of R\'{e}nyi entropy and Tsallis entropy  for all $\alpha>0$.
One can assume weaker conditions on the densities of $\{X_n\}$ to ensure the convergence for those $\alpha$ belong to some bounded and closed subset of $(0, \infty)$.

\item The condition 1) in Theorem 1 is equivalent to
the fact that  $\{p_n(x)\}$ converges to $p(x)$ uniformly on
$\mathcal {R\backslash A}$ as $n\to+\infty$. Combining with
condition 2) in Theorem 1, we know that $p(x)\leq M$. Moreover, in
the following Lemma 1 we also obtain that for every $\alpha>0$,
$$\int_{\mathcal {R}}p^{\alpha}(x)dx\leq
M_{\alpha}.$$

\item Since we are only interested with the asymptotic behavior of $\{p_n(x)\}$, it is enough to require the uniform boundedness of the $L^{\alpha}$-norm ($0<\alpha\leq \infty)$ of  $\{p_n(x)\}$ for $n>N_*$, where $N_*$ is a positive integer.

\end{enumerate}

\vspace{0.3cm}

We prepare two Lemmas which are important in the proof of Theorem 1.

%\textbf{Proof}:

%\textbf{Remark 1:} From condition 1) and 2) in Theorem 1, we  have
%that there exists $N>0$ such that $p_n(x)\leq2M$ uniformly in
%$n\in(N,+\infty)$ and $ x\in \mathcal {R}$.

\vspace{0.3cm}

\begin{lemma}\label{cov} Suppose that the conditions in Theorem 1
are
satisfied. Then for every $\alpha>0$, we have
\begin{equation}\label{conv-alpha}\lim_{n\to+\infty}\int_{\mathcal
{R}}|p^{\alpha}_n(x)-p^{\alpha}(x)|dx=0.\end{equation}\end{lemma}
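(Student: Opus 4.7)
The plan is to split into the cases $\alpha\ge 1$ and $0<\alpha<1$, reducing both to an integrated convergence of $p_n$ to $p$. Assumption (1) gives $p_n\to p$ almost everywhere (since $\mathcal{A}$ is a null set); because $p_n$ and $p$ are both densities, Scheff\'e's theorem yields $\int_{\mathcal{R}}|p_n-p|\,dx\to 0$, and Fatou's lemma applied to assumption (3) gives $\int_{\mathcal{R}}p^\alpha\,dx\le M_\alpha$ for every $\alpha>0$. For $\alpha\ge 1$, the mean value theorem applied to $t\mapsto t^\alpha$ on $[0,M]$ gives $|p_n^\alpha-p^\alpha|\le\alpha M^{\alpha-1}|p_n-p|$; integrating and invoking Scheff\'e immediately finishes this case.

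For $0<\alpha<1$, I would decompose $\mathcal{R}=[-K,K]\cup\{|x|>K\}$. On the bounded piece, $t\mapsto t^\alpha$ is uniformly continuous on $[0,M]$, so assumption (1) upgrades uniform convergence of $p_n$ to $p$ into uniform convergence of $p_n^\alpha$ to $p^\alpha$ on $[-K,K]\setminus\mathcal{A}$; hence $\int_{[-K,K]}|p_n^\alpha-p^\alpha|\,dx\to 0$ for each fixed $K$. The remaining task is a uniform (in $n$) tail estimate $\int_{|x|>K}p_n^\alpha\,dx\to 0$ as $K\to\infty$. This is the main obstacle, because $t\mapsto t^\alpha$ is non-Lipschitz near zero and small values of $p_n$ spread over unbounded regions can contribute non-negligibly to the $L^\alpha$-integral.

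To obtain the tail bound, I would first establish tightness of $\{p_n\,dx\}$: by Scheff\'e and integrability of $p$ one has $\int_{|x|>K}p_n\,dx\le\int_{|x|>K}p\,dx+\int|p_n-p|\,dx$, which is made less than any $\varepsilon$ uniformly in $n$ by choosing $K$ large and treating finitely many small $n$ separately. Next, pick an auxiliary exponent $\beta\in(0,\alpha)$ (with $M_\beta<\infty$ by assumption (3)) and a threshold $\delta>0$, and use the pointwise bounds $p_n^\alpha\le\delta^{\alpha-1}p_n$ on $\{p_n>\delta\}$ and $p_n^\alpha\le\delta^{\alpha-\beta}p_n^\beta$ on $\{p_n\le\delta\}$ (valid since $\alpha-1<0$ and $\alpha-\beta>0$) to obtain
$$\int_{|x|>K}p_n^\alpha\,dx\le\delta^{\alpha-1}\int_{|x|>K}p_n\,dx+\delta^{\alpha-\beta}M_\beta.$$
Choosing $\delta$ small controls the second term, then $K$ large controls the first via the tightness just established. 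The same argument (or Fatou directly) bounds $\int_{|x|>K}p^\alpha\,dx$. Combining the bounded and tail estimates then delivers \eqref{conv-alpha}.
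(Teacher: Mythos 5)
Your proposal is correct, but it takes a genuinely different route from the paper, most visibly in the case $0<\alpha<1$. For $\alpha\ge 1$ you make $t\mapsto t^\alpha$ Lipschitz on $[0,M]$ using the uniform bound of condition 2) and then invoke Scheff\'e, whereas the paper applies the mean value theorem pointwise and controls $\int_{\mathcal{R}}\xi_n^{\alpha-1}\,dx$ by $2M_{\alpha-1}$ from condition 3), ending with the bound $2\alpha M_{\alpha-1}\sup_{x}|p_n(x)-p(x)|$. For $0<\alpha<1$ the paper avoids any truncation entirely: it uses the elementary inequality $|b^{\alpha}-c^{\alpha}|\le|b-c|^{\alpha}$ and the H\"older-type split $|p_n-p|^{\alpha}\le|p_n-p|^{\alpha-2\gamma}(p_n+p)^{2\gamma}$, bounding $\int_{\mathcal{R}}(p_n+p)^{2\gamma}\,dx\le 2^{2\gamma+1}M_{2\gamma}$, which yields $\int_{\mathcal{R}}|p_n^{\alpha}-p^{\alpha}|\,dx\le 2^{2\gamma+1}M_{2\gamma}\bigl(\sup_x|p_n(x)-p(x)|\bigr)^{\alpha-2\gamma}$. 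Your compact-plus-tail decomposition, with tightness from Scheff\'e and the two-threshold estimate $\delta^{\alpha-1}p_n+\delta^{\alpha-\beta}p_n^{\beta}$, is a sound and self-contained alternative; its drawback is that it is purely qualitative. The paper's two displayed inequalities (its (3) and (4)) are not just steps in this Lemma but are reused verbatim in the proof of the Main Theorem to convert Feller's uniform estimate $\sup_x|f_n(x)-g(x)|=o(n^{-1/2})$ into the rates $O(n^{-1/2})$ and $O(n^{-\alpha/2+\gamma})$. Your argument proves the Lemma as stated but would not by itself supply those rate bounds, so the later sections would need the paper's quantitative inequalities anyway.
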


\begin{proof} At first, we prove that for every $\alpha>0$,
$\int_{\mathcal {R}}p^{\alpha}(x)dx\leq M_{\alpha}.$ Since for every
$\alpha>0$ the function $f_{\alpha}(x)=x^\alpha$ is continuous
 and  $\{p_n(x)\}$ converges to $p(x)$ uniformly on
$\mathcal {R\backslash A}$ as $n\to+\infty$, we have that
$\{p^\alpha_n(x)\}$ converges to $p^\alpha(x)$ uniformly on
$\mathcal {R\backslash A}$ as $n\to+\infty$. By Fatou's Lemma and
condition 3 in Theorem 1, we conclude that for every $\alpha>0$,
$$\int_{\mathcal {R}}p^{\alpha}(x)dx=\int_{\mathcal {R}}\liminf_{n\to+\infty} p_n^{\alpha}(x)dx\leq \liminf_{n\to+\infty} \int_{\mathcal {R}}p_n^{\alpha}(x)dx\leq M_\alpha. $$

The proof of Lemma \ref{cov} is decomposed two cases: $\alpha>1$ and
$0<\alpha\leq1$.

For the case $\alpha>1$, using Lagrange mean value theorem, we have
\begin{eqnarray*}
\int _{\mathcal {R}}|p_n^{\alpha }(x)-p^{\alpha }(x)|dx
&=&\alpha\int_{\mathcal {R}}|p_n(x)-p(x)|\xi_n^{\alpha-1}(x)dx\\
&\leq&\alpha\underset{x\in \mathcal {R \backslash
A}}{\sup}|p_n(x)-p(x)|\int_{\mathcal {R}}\xi_n^{\alpha-1}(x)dx.
\end{eqnarray*}
where
$\min\{p_n(x),p(x)\}\leq\xi_n^{\alpha-1}(x)\leq\max\{p_n(x),p(x)\}$.
Next, we'll prove that $\int_{\mathcal {R}}\xi_n^{\alpha-1}(x)dx$ is
bounded.
\begin{eqnarray*}
\int_{\mathcal {R}}\xi_n^{\alpha-1}(x)dx&=&\int_{\{x:p_n(x)>p(x)\}}\xi_n^{\alpha-1}(x)dx\\
& &+\int_{\{x:p_n(x)\leq
p(x)\}}\xi_n^{\alpha-1}(x)dx\\
&\leq&\int_{\{x:p_n(x)>p(x)\}}p_n^{\alpha-1}(x)dx\\
& & +\int_{\{x:p_n(x)\leq
p(x)\}}p^{\alpha-1}(x)dx\\
&\leq&\int_{\mathcal {R}}p_n^{\alpha-1}(x)dx+\int_{\mathcal
{R}}p^{\alpha-1}(x)dx.
\end{eqnarray*}
Using the condition 3) in Theorem 1, we have $\int_{\mathcal
{R}}\xi_n^{\alpha-1}(x)dx\leq 2M_{\alpha-1}<+\infty$  for every
$\alpha\in(1,\infty)$. Hence we obtain that
\begin{equation}\label{conv-alpha-1}\int _{\mathcal {R}}|p_n^{\alpha }(x)-p^{\alpha }(x)|dx\leq 2\alpha M_{\alpha-1}\underset{x\in
\mathcal {R \backslash A}}{\sup}|p_n(x)-p(x)|.\end{equation}

Combining inequality (\ref{conv-alpha-1}) and the condition 1) in
Theorem 1, we obtain (\ref{conv-alpha}) for $\alpha>1$.

Now we consider the case $0<\alpha \leq1$.

The condition 3) in Theorem 1 indicates that for every $\gamma \in
(0, \frac{\alpha}{2})$, there exists an  $M_{2\gamma}>0$
 such that $\int_{\mathcal
{R}}p_{n}^{2\gamma}(x)dx \leq M_{2\gamma}$ and $\int_{\mathcal
{R}}p^{2\gamma}(x)dx\leq M_{2\gamma}$. It follows that
\begin{eqnarray}\label{up-bd}
& &\int_{\mathcal {R}}|p_n(x)+p(x)|^{2\gamma}dx \notag \\
&=&\int_{\{x:p_n(x)>p(x)\}}(p_n(x)+p(x))^{2\gamma}dx \notag\\
& & +\int_{\{x:p_n(x)\leq
p(x)\}}(p_n(x)+p(x))^{2\gamma}dx \notag \\
&\leq& \int_{\{x:p_n(x)>p(x)\}}(2p_n(x))^{2\gamma}dx \notag \\
& & +\int_{\{x:p_n(x)\leq
p(x)\}}(2p(x))^{2\gamma}dx \notag \\
&\leq&
\int_{\mathcal {R}}2^{2\gamma}p_n(x)^{2\gamma}dx+\int_{\mathcal {R}}2^{2\gamma}p(x)^{2\gamma}dx \notag \\
&\leq& 2^{2\gamma+1}M_{2\gamma}<\infty.
\end{eqnarray}

By inequality (\ref{up-bd}) and  the trivial inequality
$|b^{\alpha}-c^{\alpha}|\leq|b-c|^{\alpha}$, we have
\begin{eqnarray}\label{up-bd-1}
& & \int _{\mathcal {R}}|p_n^{\alpha}(x)-p^{\alpha}(x)|dx\notag\\
 &\leq& \int _{\mathcal {R}}|p_n(x)-p(x)|^{\alpha}dx\notag\\
&\leq& \int _{\mathcal {R}}|p_n(x)-p(x)|^{\alpha-2\gamma}|p_n(x)+p(x)|^{2\gamma}dx\notag\\
&\leq&\ \underset{x\in
\mathcal {R \backslash A}}{\sup}|p_n(x)-p(x)|^{\alpha-2\gamma}\int_{\mathcal {\mathcal {R}}}|p_n(x)+p(x)|^{2\gamma}dx\notag\\
&\leq& \ 2^{2\gamma+1}M_{2\gamma} \underset{x\in \mathcal {R
\backslash A}}{\sup}|p_n(x)-p(x)|^{\alpha-2\gamma}.
\end{eqnarray}

Combining inequality (\ref{up-bd-1}) and the condition 1) in Theorem
1, we obtain (\ref{conv-alpha}) for  $0<\alpha\leq 1$.
\end{proof}

\vspace{0.2cm}

\begin{lemma}\label{con}Suppose that the conditions in Theorem 1 be
satisfied. Then for any $\varepsilon
>0$ there exists a $\delta>0$ such that $$|T_{\alpha}(Y_n)-H(Y_n)|<\varepsilon
$$ for all $n\in\mathbb{N}$ and all $\alpha$ satisfying
$|\alpha-1|<\delta$.

Furthermore, for any $\varepsilon
>0$ there exists $\delta >0$ such that $$|T_{\alpha}(Y)-H(Y)|<\varepsilon $$ for all $\alpha$ satisfying
$|\alpha-1|<\delta$.
\end{lemma}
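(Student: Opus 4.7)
The plan is to obtain a quantitative, uniform-in-$n$ bound of the form $|T_{\alpha}(Y_n)-H(Y_n)|\le C|\alpha-1|$ on a fixed neighbourhood of $\alpha=1$, from which the first assertion follows immediately by choosing $\delta$ proportional to $\varepsilon$. Since $\int p_n\,dx=1$, I would first rewrite
\[
T_{\alpha}(Y_n)-H(Y_n)\;=\;\int_{\mathcal{R}}\Bigl(p_n\log p_n-\frac{p_n^{\alpha}-p_n}{\alpha-1}\Bigr)\,dx.
\]

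Next, writing $p^{\alpha}=p\cdot e^{(\alpha-1)\log p}$ and applying Taylor's theorem with Lagrange remainder to the exponential yields the pointwise identity
\[
\frac{p^{\alpha}-p}{\alpha-1}-p\log p\;=\;\frac{\alpha-1}{2}\,p(\log p)^{2}\,p^{\theta(\alpha-1)}
\]
for some $\theta=\theta(\alpha,p)\in(0,1)$. Consequently the problem reduces to producing, for $\alpha$ in a fixed neighbourhood $|\alpha-1|\le\delta_{0}$, a bound on $\int_{\mathcal{R}}p_n(\log p_n)^{2}\,p_n^{\theta(\alpha-1)}\,dx$ that is independent of both $n$ and $\alpha$.

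The key step, and the main obstacle, is controlling this integrand uniformly. I would split the domain into $\{p_n\ge 1\}$ and $\{p_n<1\}$. On the former, condition 2) of Theorem 1 gives $(\log p_n)^{2}\le(\log M)^{2}$ and the perturbation factor $p_n^{\theta(\alpha-1)}\le\max(1,M^{\delta_{0}})$, so together with $\int p_n\,dx=1$ this portion is bounded by an absolute constant. The delicate region is $\{p_n<1\}$ when $\alpha<1$, where $p_n^{\theta(\alpha-1)}$ grows unboundedly as $p_n\to 0$. Here I would use the elementary estimate that for every $\eta>0$ there is $C_{\eta}<\infty$ with $(\log t)^{2}\le C_{\eta}\,t^{-\eta}$ for $t\in(0,1)$, so that on $\{p_n<1\}$
\[
p_n(\log p_n)^{2}\,p_n^{\theta(\alpha-1)}\;\le\;C_{\eta}\,p_n^{1-\delta_{0}-\eta}.
\]
Choosing, say, $\delta_{0}=\eta=1/4$ one has $1-\delta_{0}-\eta=1/2>0$, and condition 3) of Theorem 1 supplies the uniform bound $\int p_n^{1/2}\,dx\le M_{1/2}$, closing this case.

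Combining the two pieces gives $|T_{\alpha}(Y_n)-H(Y_n)|\le \frac{C}{2}|\alpha-1|$ with $C$ independent of $n$ and of $\alpha$ with $|\alpha-1|\le \delta_{0}$, so it suffices to take $\delta=\min(\delta_{0},2\varepsilon/C)$. For the ``furthermore'' part, the identical chain of estimates applies to $p$ in place of $p_n$: by Remark 1, part 3) and the opening argument in the proof of Lemma 1, the limiting density satisfies $p\le M$ a.e.\ and $\int p^{\beta}\,dx\le M_{\beta}$ for every $\beta>0$, which are the only features of $\{p_n\}$ used above.
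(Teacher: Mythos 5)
Your proposal is correct and follows essentially the same route as the paper: both establish the quantitative bound $|T_{\alpha}(Y_n)-H(Y_n)|\le C|\alpha-1|$ uniformly in $n$ by reducing the problem to a uniform bound on $\int_{\mathcal{R}} p_n^{s}(\log p_n)^{2}\,dx$ for $s$ in a fixed neighbourhood of $1$, which is then extracted from conditions 2) and 3) of Theorem 1 by splitting into $\{p_n\le 1\}$ and $\{p_n>1\}$. The only difference is cosmetic: you expose the $(\log p_n)^{2}$ factor via a second-order Taylor expansion of $e^{(\alpha-1)\log p_n}$ with Lagrange remainder, whereas the paper writes $1-\int p_n^{\alpha}\,dx=-\int_1^{\alpha}\int p_n^{t}\log p_n\,dx\,dt$ and applies the mean value theorem to $t\mapsto p_n^{t}$.
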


\begin{proof} The proof is given in Appendix.
\end{proof}

\vspace{0.5cm}

It is time to give the proof of Theorem 1.

\begin{proof} We consider two cases: $\alpha \neq 1$ and $\alpha=1$.

Suppose $\alpha\neq 1$. By Lemma \ref{cov}, for every $\alpha>0$
there exist $N>0$ such that $\int _{\mathcal {R}}p_n^{\alpha
}(x)dx\geq \frac{1}{2}\int _{\mathcal {R}}p^{\alpha }(x)dx$ for
$n>N$.
 Using the inequality $\mathrm{log}(1+x)<x$ ($x>0$), we have
\begin{eqnarray}\label{renyi}
& & |R_{\alpha}(Y_n)-R_{\alpha}(Y)| \notag\\
&=&\frac{1}{|1-\alpha
|}|\mathrm{log}\frac{\int_{\mathcal
{R}}p_n^{\alpha}(x)dx}{\int_{\mathcal
{R}}p^{\alpha}(x)dx}|\notag\\
&\leq&\max\left \{ \frac{\int_{\mathcal
{R}}p_n^{\alpha}(x)-p^{\alpha}(x)dx}{|1-\alpha|\int_{\mathcal
{R}}p^{\alpha}(x)dx},\frac{\int_{\mathcal
{R}}p^{\alpha}(x)-p_n^{\alpha}(x)dx}{|1-\alpha|\int_{\mathcal
{R}}p_n^{\alpha}(x)dx}
\right \}\notag\\
&\leq&\max\left \{ \frac{\int_{\mathcal
{R}}p_n^{\alpha}(x)-p^{\alpha}(x)dx}{|1-\alpha|\int_{\mathcal
{R}}p^{\alpha}(x)dx},\frac{\int_{\mathcal
{R}}p^{\alpha}(x)-p_n^{\alpha}(x)dx}{|1-\alpha|\frac{1}{2}\int_{\mathcal
{R}}p^{\alpha}(x)dx}
\right \}\notag\\
&\leq&\frac{ 2\int_{\mathcal {R}}\left
|p^{\alpha}(x)-p_n^{\alpha}(x)\right |dx}{|1-\alpha|\int_{\mathcal
{R}}p^{\alpha}(x)dx}.
\end{eqnarray}
Combining (\ref{renyi})  and Lemma \ref{cov}, we obtain that for
every $\alpha>0$ and $\alpha\neq1$,
$$\lim_{n\to \infty}|R_{\alpha}(Y_n)-R_{\alpha}(Y)|=0.$$

It is obvious that
\begin{eqnarray}\label{tsallis}
& &|T_{\alpha}(Y_n)-T_{\alpha}(Y)|\notag\\
&=&|\frac{1}{\alpha-1}(1-\int_{\mathcal {R}}
p_n^{\alpha}(x)dx)-\frac{1}{\alpha-1}(1-\int_{\mathcal {R}}
p^{\alpha}(x)dx)|\notag\\
&=&|\frac{1}{\alpha-1}\int_{\mathcal {R}}
p_n^{\alpha}(x)-p^{\alpha}(x)dx|\notag\\
&\leq& \frac{1}{|\alpha-1|}\int_{\mathcal
{R}}|p_n^{\alpha}(x)-p^{\alpha}(x)|dx.
\end{eqnarray}
We obtain that  for every $\alpha>0$ and $\alpha\neq 1$,
\begin{equation}\label{convergence}\lim_{n\to \infty}|T_{\alpha}(Y_n)-T_{\alpha}(Y)|=0.\end{equation}

Now we consider the case $\alpha=1$.

Remember that for any random variable $X$, $R_1(X)=T_1(X)=H(X)$.
By triangular inequality,
\begin{eqnarray}\label{entropy}
& &|H(Y_n)-H(Y)|\notag \\
%&=&|H(Y_n)-T_{\alpha_0}(Y_n)+T_{\alpha_0}(Y_n)-T_{\alpha_0}(Y)\notag\\
%& & +T_{\alpha_0}(Y)-H(Y)|\notag\\
&\leq&|H(Y_n)-T_{\alpha_0}(Y_n)|+|T_{\alpha_0}(Y_n)-T_{\alpha_0}(Y)|\notag\\
& & +|T_{\alpha_0}(Y)-H(Y)|.
\end{eqnarray}
Given any $\varepsilon
>0$,  there exists $\delta_1 >0$
such that $|H(Y_n)-T_{\alpha_0}(Y_n)|<\frac{\varepsilon}{3}$, for
all $n\in\mathbb{N}$ and all points $\alpha$ satisfying
$|\alpha_0-1|<\delta_1$ by Lemma \ref{con}. Similarly, there exists
$\delta_2 >0$  such that $|T_{\alpha_0}(Y)-H(Y)| \leq
\frac{\varepsilon}{3} $ all points $\alpha$ satisfying
$|\alpha_0-1|<\delta_2$.

By (\ref{convergence}), we know that for every  $\alpha_0\neq 1$ satisfying
$|\alpha_0-1|<\min\{\delta_1,\delta_2\}$, there exists an $N>0$ such
that $|T_{\alpha_0 }(Y_{n})-T_{\alpha_0}(Y)|\leq
\frac{\varepsilon}{3}$ for all $n>N$.

According to inequality (\ref{entropy}), we conclude that for any $\varepsilon>0$, there
exists an $N$ such that $ |H(Y_n)-H(Y)|\leq\varepsilon $ for $n \ge
N$.

Therefore, $H(Y_n)$ converges to $H(Y)$.

\end{proof}

\section{Proof of Main Theorem}

\begin{lemma}\label{expansions} (Expansions for densities, Feller \cite{FE}) \it{Let $X_1,X_2,\ldots ,X_n$ be independent copies of a random variable
$X$ with characteristic function $\varphi(t)$, $f_n$ be the density
function of normalized sum $S_n$, and $g$ be the density of the
standard Gaussian
   distribution $G$. Suppose that $\mathbb{E}|X|^3=\rho < \infty $, and that ${|\varphi|}^v $ is integrable for some $v > 1$. Then as $n \rightarrow \infty$
   \begin{eqnarray}
   f_n(x)- g(x)- \frac{\rho}{6\sqrt{n}} (x^3-3x)g(x)= o(\frac{1}{\sqrt{n}})
   \end{eqnarray}
   uniformly in $x$. }
\end{lemma}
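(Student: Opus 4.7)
The plan is to reduce the pointwise expansion to a one-dimensional Fourier estimate. The integrability of $|\varphi|^v$, combined with $|\varphi|\le 1$, guarantees that for $n\ge v$ the characteristic function $\varphi_n(t):=\varphi(t/\sqrt n)^n$ of $S_n$ lies in $L^1(\mathbb{R})$, so Fourier inversion yields $f_n(x) = \frac{1}{2\pi}\int e^{-itx}\varphi_n(t)\,dt$. A direct differentiation of $\hat g(t)=e^{-t^2/2}$ gives $\int e^{itx}(x^3-3x)g(x)\,dx = (it)^3 e^{-t^2/2}$, so the target difference admits the representation
\[
f_n(x)-g(x)-\tfrac{\rho}{6\sqrt n}(x^3-3x)g(x) = \frac{1}{2\pi}\int_{\mathbb{R}} e^{-itx}\Delta_n(t)\,dt,
\]
where $\Delta_n(t) := \varphi(t/\sqrt n)^n - e^{-t^2/2} - \tfrac{\rho(it)^3}{6\sqrt n}e^{-t^2/2}$. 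Since $|e^{-itx}|=1$, it suffices to prove $\int|\Delta_n(t)|\,dt = o(n^{-1/2})$, which automatically delivers uniformity in $x$.

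The strategy is to partition $\mathbb{R}$ into three scales. On the central band $|t|\le \varepsilon\sqrt n$, the existence of $\rho$ and $\mathbb{E}X^2=1$ license a Taylor expansion $\log \varphi(u) = -u^2/2 + \rho(iu)^3/6 + r(u)$ with $r(u)=o(|u|^3)$ as $u\to 0$. Substituting $u=t/\sqrt n$, multiplying by $n$, and carefully exponentiating show that $|\Delta_n(t)|$ is majorized on this band by an expression of the form $\bigl(P(|t|)/n + n\,|r(t/\sqrt n)|\bigr)e^{-t^2/4}$ for some polynomial $P$, and both pieces integrate to $o(n^{-1/2})$. On the intermediate band $\varepsilon\sqrt n < |t|\le \delta\sqrt n$, the quadratic expansion of $\varphi$ at $0$ forces $|\varphi(s)|\le e^{-s^2/4}$ for $|s|\le\delta$, hence $|\varphi(t/\sqrt n)|^n\le e^{-t^2/4}$, so both terms in $\Delta_n$ integrate to an exponentially small quantity on this range. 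On the tail $|t|>\delta\sqrt n$, the fact that $X$ has a density (implied by $|\varphi|^v\in L^1$) together with the Riemann--Lebesgue lemma forces $q:=\sup_{|s|\ge \delta}|\varphi(s)|<1$, whence
\[
\int_{|t|>\delta\sqrt n}|\varphi(t/\sqrt n)|^n\,dt \le q^{\,n-v}\sqrt n\int_{\mathbb{R}}|\varphi(s)|^v\,ds,
\]
which is exponentially small, and the Gaussian Edgeworth term contributes a negligible tail as well.

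The main obstacle is the central band. A pointwise Taylor remainder $r(u)=o(|u|^3)$ is not by itself enough to conclude $\int_{|t|\le \varepsilon\sqrt n} n\,|r(t/\sqrt n)|\,e^{-t^2/4}\,dt = o(n^{-1/2})$, because the pointwise control must be promoted to a uniform estimate compatible with integration against polynomial weights. The key technical step is to upgrade the remainder to $|r(u)|\le |u|^3 \omega(|u|)$ with $\omega$ bounded on a neighbourhood of $0$ and $\omega(u)\to 0$ as $u\to 0$; this is available because the third derivative of $\varphi$ is continuous at $0$ (guaranteed by $\rho<\infty$), and the dominated convergence theorem then delivers the claimed rate.
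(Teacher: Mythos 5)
The paper gives no proof of this lemma; it is quoted verbatim (as Lemma 3) from Feller's book, and your argument is essentially Feller's own proof of that theorem (Ch.~XVI, Theorem 2): Fourier inversion, the three-scale split of the frequency axis, the bound $|\varphi(s)|\le e^{-s^2/4}$ near the origin from the second-moment expansion, the bound $q_\delta<1$ away from it, and the exponentiation of $n\log\varphi(t/\sqrt n)$ on the central band. The outline is correct, and you rightly isolate the only delicate point, namely promoting the pointwise remainder $r(u)=o(|u|^3)$ to a monotone modulus $|r(u)|\le|u|^3\omega(|u|)$ so that dominated convergence applies. Two small inaccuracies: first, the coefficient produced by the Taylor expansion of $\log\varphi$ is the third moment $\mathbb{E}(X^3)$, not the absolute moment $\mathbb{E}|X|^3$; the paper's transcription of Feller already conflates the two, and your expansion silently inherits that. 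Second, $|\varphi|^v\in L^1$ does not imply that $X$ itself has a density, so the appeal to the Riemann--Lebesgue lemma is misattributed; the correct route to $q:=\sup_{|s|\ge\delta}|\varphi(s)|<1$ is that $|\varphi|^v$ is uniformly continuous and integrable, hence vanishes at infinity, while $|\varphi(s)|=1$ for some $s\neq 0$ would force a lattice distribution and destroy integrability of $|\varphi|^v$. Neither point affects the validity of the argument.
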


\vspace{0.3cm}

\textbf{Remark 2:} Since $g(x)$ is the density function of the
standard Gaussian
   distribution $G$, $g(x)\leq \frac{1}{\sqrt{2 \pi}}<1$. For every $\alpha>0$, there exists an
$M'_{\alpha}>0$ such that $\int_\mathcal {R}g^\alpha(x)dx\leq
M'_\alpha$. According to Lemma \ref{expansions}, we have
\begin{eqnarray}\label{feller} \underset{x\in\mathcal
{R}}{\sup}\{|f_n(x)-g(x)|\}=o(n^{-\frac{1}{2}}).
\end{eqnarray}
 %$g(x)$ satisfies the condition 1) and condition 3) in Theorem 1, and $f_n(x),
 %g(x)$ satisfy the condition 2) in Theorem 1.
Using (\ref{feller}), we  have that there exists an $N>0$ such that
$f_n(x)\leq 1$ uniformly for $n>N$ and $ x\in \mathcal {R}$. Hence,
without loss of generality, we can suppose that $f_n(x)\leq 1$
uniformly in $n\in\mathbb{N}$ and $ x\in \mathcal {R}$.

\vspace{0.2cm}

Next, we'll prove that $f_n(x)$ satisfy the condition 3) in Theorem
1.

\begin{lemma} \label{ineqal1} (Theorem 2.10 \cite{P}) {\it Let $X_1,X_2,\ldots ,X_n$ be independent random
variables with zero means, and let $k\geq2$ and
$Z_n=X_1+X_2+\ldots+X_n$. Then
$$\mathbb{E}|Z_n|^k=C(k)n^{\frac{k}{2}-1}\sum_{i=1}^{n}\mathbb{E}|X_i|^k,$$
where $C(k)$ is a positive constant depending only on $k$.}
\end{lemma}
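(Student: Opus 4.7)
The plan is to first dispose of the case $k=2$ by direct computation, then reduce the general case $k>2$ to a Marcinkiewicz--Zygmund--type inequality combined with the power-mean inequality. For $k=2$, expanding $\mathbb{E}|Z_n|^2=\sum_i \mathbb{E} X_i^2 + 2\sum_{i<j}\mathbb{E}(X_i X_j)$ and using independence together with $\mathbb{E} X_i=0$ kills the cross terms, yielding the claimed bound with $C(2)=1$ and the exponent $n^{k/2-1}=n^{0}=1$.

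For $k>2$, my strategy would proceed in two steps. First, I would establish the Marcinkiewicz--Zygmund bound
\begin{equation*}
\mathbb{E}|Z_n|^k \le C_1(k)\,\mathbb{E}\Bigl(\sum_{i=1}^n X_i^2\Bigr)^{k/2},
\end{equation*}
via the classical route of symmetrization (introduce an independent copy $\{X_i'\}$, form $Y_i=X_i-X_i'$, and invoke Jensen's inequality to pass from $Z_n$ to $\sum_i Y_i$) followed by a conditional application of Khintchine's inequality with an auxiliary Rademacher sequence. Second, I would apply the power-mean (Jensen) inequality
\begin{equation*}
\Bigl(\tfrac{1}{n}\sum_{i=1}^n X_i^2\Bigr)^{k/2} \le \tfrac{1}{n}\sum_{i=1}^n |X_i|^k \qquad (k\ge 2)
\end{equation*}
to the convex map $t\mapsto t^{k/2}$. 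Taking expectations and multiplying through by $n^{k/2}$ yields $\mathbb{E}|Z_n|^k \le C(k)\,n^{k/2-1}\sum_i \mathbb{E}|X_i|^k$.

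For integer $k$, which is the only case actually needed to combine with Lemma~\ref{expansions} under the moment condition~1) of the Main Theorem, an alternative fully elementary route is available: expand $Z_n^k$ by the multinomial theorem and note that, by independence and $\mathbb{E}X_i=0$, every surviving term is indexed by a multi-index $(a_1,\dots,a_n)$ with $\sum_j a_j=k$ and each nonzero $a_j\ge 2$. Such multi-indices have at most $\lfloor k/2\rfloor$ nonzero entries, so counting produces a polynomial factor of order $n^{k/2}$; H\"{o}lder's inequality $\prod_j \mathbb{E} X_j^{a_j}\le \prod_j (\mathbb{E}|X_j|^k)^{a_j/k}$ together with the arithmetic--geometric mean inequality then converts the products into a multiple of $\sum_j \mathbb{E}|X_j|^k$.

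The main obstacle, in either route, is the combinatorial bookkeeping: producing a Khintchine constant that depends only on $k$ and not on $n$ for general real exponents, or carrying out the careful counting of admissible multi-indices to confirm that the total contribution is indeed of order $n^{k/2-1}\sum_i \mathbb{E}|X_i|^k$ and not a larger power of $n$. The conceptual point throughout is that independence together with the zero mean forces the leading contribution to scale like the variance $n^{k/2}$ rather than the trivial worst case $n^{k-1}$, which is precisely what the factor $n^{k/2-1}$ records.
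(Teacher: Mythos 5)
The paper does not prove this lemma at all: it is imported verbatim as Theorem~2.10 of Petrov's book, so there is no internal proof to compare against. Your proposal is a correct reconstruction of the standard argument behind that cited result, and indeed the route you describe (a Marcinkiewicz--Zygmund bound $\mathbb{E}|Z_n|^k\le C_1(k)\,\mathbb{E}(\sum_i X_i^2)^{k/2}$ followed by the power-mean inequality applied to the convex map $t\mapsto t^{k/2}$, which produces exactly the factor $n^{k/2-1}$) is essentially how the inequality is established in the literature. Two remarks. First, you correctly read the displayed ``$=$'' as a typo for ``$\le$''; taken literally as an equality the statement is false (e.g.\ it fails for $k=3$ and symmetric versus asymmetric summands), and only the upper bound is used in Lemma~5. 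Second, your ``fully elementary'' alternative via the multinomial theorem works as stated only for even integers $k$, since $|Z_n|^k$ admits no multinomial expansion when $k$ is odd; for odd $k$ one would have to interpolate (e.g.\ via Lyapunov's inequality), which drags in higher moments of the $X_i$. This is harmless for the paper's application, where all moments are finite and one may simply choose $k$ even with $\frac{1}{k+1}<\alpha$, but it should be flagged if you intend that route to cover every integer $k\ge 2$. The main route is complete modulo the classical symmetrization--Khintchine step, whose constant depends only on $k$, so deferring it to the standard references is acceptable.
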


\vspace{0.2cm}

\begin{lemma}\label{bounded}{\it  Let $X_1,X_2,\ldots ,X_n$ be independent copies of a random
variable $X$ with probability density $f$ and characteristic
function $\varphi(t)$. $f_n(x)$ be the density of the normlized sum
$S_n$. $\mathbb{E}(X)=0$, and $\mathbb{E}(|X|^k)=\rho_k$, where
$k=1,2,...$ and $\rho_k$ is finite. Then for each positive $\alpha$,
there exists an $M''_{\alpha }>0$ such that $$\int _{R}f_{n}^{\alpha
}(x)dx\leq M''_{\alpha }.$$}\end{lemma}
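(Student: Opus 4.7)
The plan is to separate the range of $\alpha$ into $\alpha\geq 1$ and $0<\alpha<1$. The case $\alpha\geq 1$ is immediate from Remark 2: once we have $f_n(x)\leq 1$ uniformly in $n$ and $x$, the pointwise bound gives $f_n^{\alpha}(x)\leq f_n(x)$, so $\int_{\mathcal{R}} f_n^{\alpha}(x)\,dx \leq \int_{\mathcal{R}} f_n(x)\,dx = 1$, and we may take $M''_{\alpha}=1$.

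The substance of the lemma therefore lies in the range $0<\alpha<1$, where the pointwise bound on $f_n$ no longer controls $f_n^{\alpha}$ because the latter has slower tail decay. The idea is a weighted H\"older split against a polynomial weight. Fix any $\gamma > (1-\alpha)/(2\alpha)$ and write
\[
f_n^{\alpha}(x) = \bigl[f_n(x)(1+x^2)^{\gamma}\bigr]^{\alpha}\cdot (1+x^2)^{-\alpha\gamma}.
\]
Applying H\"older's inequality with conjugate exponents $p=1/\alpha$ and $q=1/(1-\alpha)$ yields
\[
\int_{\mathcal{R}} f_n^{\alpha}(x)\,dx \leq \Bigl(\int_{\mathcal{R}} f_n(x)(1+x^2)^{\gamma}\,dx\Bigr)^{\alpha}\Bigl(\int_{\mathcal{R}}(1+x^2)^{-\alpha\gamma/(1-\alpha)}\,dx\Bigr)^{1-\alpha}.
\]
The choice $\gamma>(1-\alpha)/(2\alpha)$ makes the second factor a finite constant depending only on $\alpha$ and $\gamma$.

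It remains to bound the first factor uniformly in $n$, and this is exactly where Lemma \ref{ineqal1} is invoked. Using the elementary inequality $(1+x^2)^{\gamma}\leq 2^{\gamma}(1+|x|^{2\lceil\gamma\rceil})$, it suffices to show $\sup_n\mathbb{E}|S_n|^{2\lceil\gamma\rceil}<\infty$. By Lemma \ref{ineqal1} applied to the centred i.i.d.\ variables $X_i$, for any integer $k\geq 2$,
\[
\mathbb{E}|S_n|^{k} = n^{-k/2}\mathbb{E}\Bigl|\sum_{i=1}^{n}X_i\Bigr|^{k} = n^{-k/2}\cdot C(k)\,n^{k/2-1}\cdot n\rho_k = C(k)\rho_k,
\]
which is independent of $n$. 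Combining the two factors produces a finite $M''_{\alpha}$ and completes the proof. The main obstacle is the slow tail decay of $f_n^{\alpha}$ in the sub-unit regime; the weighted H\"older decomposition trades pointwise control for moment control on $S_n$, and Lemma \ref{ineqal1} supplies exactly the uniform-in-$n$ moment bound this trade demands.
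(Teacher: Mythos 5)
Your proof is correct and follows essentially the same route as the paper: the $\alpha\geq 1$ case via the uniform bound $f_n\leq 1$ from Remark 2, and the $0<\alpha<1$ case via H\"older's inequality with conjugate exponents $1/\alpha$ and $1/(1-\alpha)$ combined with the uniform-in-$n$ moment bound $\mathbb{E}|S_n|^k\leq C(k)\rho_k$ from Lemma \ref{ineqal1}. The only cosmetic difference is that the paper splits the integral over $[-1,1]$ and $\{|x|\geq 1\}$ and weights by $|x|^{k\alpha}$ on the outer region, whereas you use the global weight $(1+x^2)^{\gamma}$; both yield the same constant.
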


\begin{proof} We distinguish three cases: $\alpha=1$, $\alpha>1$ and
$0<\alpha<1$.

The case $\alpha=1$ is obvious because $\{f_n\}$ are densities. If
$\alpha>1$, noting that $\underset{x\in\mathcal
{R},n\in\mathbb{N}}{\mathrm{sup}f_n(x)}\leq 1$ in Remark 2 we have

$$
\int_{R}f_{n}^{\alpha}(x)dx =\int_{R}f_{n}^{\alpha-1}(x)f_n(x)dx
\leq 1.
$$

If $0<\alpha<1$, one can choose a positive integer $k$ such that
$\frac{1}{k+1}<\alpha$.  From Lemma \ref{ineqal1}, we obtain that
\begin{eqnarray*}
\mathbb{E}|S_n|^k&=&\mathbb{E}\left |(X_1+X_2+...+X_n)/n^\frac{1}{2} \right |^k\\
&=&\mathbb{E}|Z_n|^k/n^\frac{k}{2}\\
&\leq&C(k)n^{\frac{k}{2}-1}(\sum_{i=1}^{n}|X_i|^k)/n^\frac{k}{2}\\
&=&C(k)\mathbb{E}|X|^k\\
&:=&\rho'_k<\infty.
\end{eqnarray*}

On the other hand, noting that $\underset{x\in\mathcal
{R},n\in\mathbb{N}}{\mathrm{sup}f_n(x)}\leq 1$ in Remark 2 we have
that
\begin{eqnarray*}
& & \int_{\mathcal {R}}f_{n}^{\alpha}(x)dx\\
&=&\int_{-1}^{1}f_{n}^{\alpha} (x)dx+\int_{|x|\geq1}
f_{n}^{\alpha}(x)dx\\
&=&\int_{-1}^{1}f_{n}^{\alpha}(x)dx+\int_{|x|\geq1}
f_{n}^{\alpha}(x)|x|^{k\alpha}|x|^{-k\alpha}dx\\
&\leq&2+(\int_{|x|\geq1}f_{n}(x)|x|^kdx)^{\alpha}(\int_{|x|\geq1}|x|^{\frac{-k\alpha}{1-\alpha}}dx)^{1-{\alpha}}\\
&\leq&2+\rho_k'^{\alpha}(\int_{|x|\geq1}|x|^{-\frac{k\alpha}{1-\alpha}}dx)^{1-{\alpha}},
\end{eqnarray*}
where the first inequality follows from  H\"older inequality.

Since$\int_{|x|\geq1}|x|^{-\frac{k\alpha}{1-\alpha}}dx$ is finite
for $\frac{1}{k+1}<\alpha<1$, one can find a positive constant
$M''_{\alpha}$ independent of $n$ such that
$$\int_{\mathcal {R}}f_{n}^{\alpha}(x)dx<M''_{\alpha}.$$

\end{proof}

\vspace{0.2cm}

\begin{lemma}\label{ineqal2}{\it  If $0\leq y, x\leq 1$, then for every $\gamma \in
(0,\frac{1}{2})$, there exists a $Q_{\gamma}>0$ such that}
$$Q_{\gamma}|x^{1-\gamma}-y^{1-\gamma}|\geq|x^{\gamma}y^{1-\gamma}\log{x}-y^{\gamma}x^{1-\gamma}\log{y}|.$$
\end{lemma}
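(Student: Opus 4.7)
My plan is to reduce the inequality to bounding a ratio, via symmetry and a substitution. Since the statement is invariant under $x\leftrightarrow y$, I may assume $1\ge x\ge y\ge 0$. Then $|x^{1-\gamma}-y^{1-\gamma}|=x^{1-\gamma}-y^{1-\gamma}$, and because $\log x,\log y\le 0$ with $|\log x|\le|\log y|$, together with $y^\gamma x^{1-\gamma}\ge x^\gamma y^{1-\gamma}$ (the latter following from $x\ge y$ and $1-2\gamma>0$), the right-hand absolute value equals the non-negative expression $y^\gamma x^{1-\gamma}|\log y|-x^\gamma y^{1-\gamma}|\log x|$.

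The main trick is to split this into a ``log-difference'' part and a ``power-difference'' part:
$$y^\gamma x^{1-\gamma}(|\log y|-|\log x|)+|\log x|\bigl(y^\gamma x^{1-\gamma}-x^\gamma y^{1-\gamma}\bigr),$$
each summand being non-negative. Introducing $s=y/x\in[0,1]$, the first summand becomes $x\cdot s^\gamma|\log s|$ and the second becomes $x\cdot x^\gamma|\log x|\cdot s^\gamma(1-s^{1-2\gamma})$, while the denominator factors as $x^{1-\gamma}-y^{1-\gamma}=x^{1-\gamma}(1-s^{1-\gamma})$. Dividing, I am left to bound
$$x^\gamma\cdot\frac{s^\gamma|\log s|}{1-s^{1-\gamma}}\quad\text{and}\quad x^\gamma|\log x|\cdot\frac{s^\gamma(1-s^{1-2\gamma})}{1-s^{1-\gamma}}$$
uniformly for $x\in(0,1]$ and $s\in[0,1)$. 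The $x$-factors are bounded ($x^\gamma\le 1$ and $x^\gamma|\log x|\le 1/(e\gamma)$ on $(0,1]$), so it suffices to bound the two $s$-ratios.

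Each $s$-ratio is continuous on $(0,1)$, tends to $0$ as $s\to 0^+$ (since $s^\gamma|\log s|\to 0$ and $s^\gamma\to 0$), and, by the expansion $1-s^a=a(1-s)+O((1-s)^2)$ as $s\to 1^-$, has a finite limit at $s=1$ (namely $1/(1-\gamma)$ and $(1-2\gamma)/(1-\gamma)$ respectively). Hence each is bounded on $[0,1)$, and choosing $Q_\gamma$ as the sum of the two suprema multiplied by the corresponding $x$-bounds finishes the proof.

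The main obstacle is the degenerate case $s=1$ (i.e., $x=y$), where both sides of the target inequality vanish and a Taylor expansion is required to certify that the ratio remains finite; secondarily, the boundary $y=0$ must be handled by interpreting $y^\gamma\log y$ as its limit $0$. The hypothesis $\gamma<1/2$ is used decisively, both to make the sign-removal in the first paragraph valid and to keep the second $s$-ratio bounded at $s=1$.
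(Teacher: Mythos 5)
Your proof is correct, and it takes a genuinely different route from the paper's. You split the (correctly signed) right-hand side into a log-difference piece and a power-difference piece, pass to the homogeneous variable $s=y/x$, and bound the two resulting one-variable ratios $\frac{s^{\gamma}|\log s|}{1-s^{1-\gamma}}$ and $\frac{s^{\gamma}(1-s^{1-2\gamma})}{1-s^{1-\gamma}}$ by continuity on $[0,1]$ (finite limits $\frac{1}{1-\gamma}$ and $\frac{1-2\gamma}{1-\gamma}$ at $s=1$, limit $0$ at $s=0$), together with $x^{\gamma}\le 1$ and $x^{\gamma}|\log x|\le \frac{1}{e\gamma}$. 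The paper instead fixes $Q_{\gamma}$ in advance as twice the supremum of $d_{\gamma}(x)=x^{\gamma}-(1-2\gamma)x^{\gamma}\log x$ on $[0,1]$, factors the difference of the two sides as $x^{1-\gamma}y^{1-\gamma}F_{\gamma}(x)$, and shows $F_{\gamma}(y)=0$ and $F_{\gamma}'(x)\ge 0$ for $x\ge y$, i.e.\ a monotonicity argument in $x$ with $y$ frozen. Both arguments are elementary; both use $\gamma<\frac12$ essentially for sign control (in yours, to make the two summands non-negative; in the paper's, to make $F_{\gamma}'\ge 0$). Your version yields a more explicit constant and avoids differentiation at the cost of a Taylor expansion at the degenerate point $s=1$; the paper's avoids the substitution but relies on finding the right algebraic factorization. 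One small slip: your displayed expression for the second summand, $x\cdot x^{\gamma}|\log x|\cdot s^{\gamma}(1-s^{1-2\gamma})$, carries a spurious factor $x^{\gamma}$ (it should be $x|\log x|\,s^{\gamma}(1-s^{1-2\gamma})$), but the ratio you actually bound after dividing by $x^{1-\gamma}(1-s^{1-\gamma})$ is the correct one, so the conclusion is unaffected.
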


\begin{proof} The proof is given in Appendix.\end{proof}

\vspace{0.5cm}

\textbf{Proof of Main Theorem.}

\begin{proof}
From Remark 2, we know that $f_n(x)$ satisfies the condition 2) in
Theorem 1, and $f_n(x),g(x)$ satisfy the condition 1) in Theorem 1.
By Lemma \ref{bounded} and letting
$M_{\alpha}=\max\{M'_{\alpha},M''_{\alpha}\}$,  we obtain that
 for every $\alpha>0$  $\int_\mathcal {R}f_n^{\alpha}(x)dx,\int_\mathcal {R}g^{\alpha}(x)dx\leq M_{\alpha}$ uniformly in $n\in\mathbb{N}$ and  $\{f_n(x)\}$ satisfies the condition 3) in Theorem
 1.
 Hence from Theorem 1 we obtain that
$$\lim_{n\to\infty}R_\alpha(S_n)=R_\alpha(G), \ \ \ \lim_{n\to\infty}T_\alpha(S_n)=T_\alpha(G).$$

Next, we'll study the rates of convergence for $R_\alpha(S_n)$ and
$T_\alpha(S_n)$.

At first, we consider the case $\alpha>0$ and $\alpha\neq 1$.

Using the inequality $\log (1+x)<x\ (x>0)$, and inequalities
(\ref{renyi}), (\ref{conv-alpha-1}), (\ref{up-bd-1}), we have
\begin{eqnarray}\label{renyi-1}
&& |R_{\alpha}(S_n)-R_{\alpha}(G)|\notag\\
&\leq&\frac{ 2\int_{\mathcal {R}}\left
|f^{\alpha}_n(x)-g^{\alpha}(x)\right |dx}{|1-\alpha|\int_{\mathcal
{R}}f^{\alpha}_n(x)dx}\notag\\
&\leq&\begin{cases} \frac{4\alpha
M_{\alpha-1}\underset{x\in\mathcal{R}}{\sup}|f_n(x)-g(x)|}{\int_{\mathcal
{R}}p^{\alpha}(x)dx(\alpha-1)}&
\alpha\in(1,+\infty);\\
\frac{M_{2\gamma}\underset{x\in \mathcal
{R}}{\sup}|f_n(x)-g(x)|^{\alpha-2\gamma}}{2^{-2\gamma-2}\int_{\mathcal
{R}}p^{\alpha}(x)dx(1-\alpha)}& \alpha\in(0,1),\gamma
\in(0,\frac{\alpha }{2}).
\end{cases}\notag\\
\end{eqnarray}
Combining (\ref{feller}) and  (\ref{renyi-1}), we obtain
\begin{displaymath}|R_{\alpha}(S_n)-R_{\alpha}(G)|=
\begin{cases} O(n^{-\frac{1}{2}}) & \alpha\in(1,+\infty); \\O(n^{-\frac{\alpha
}{2}+\gamma}) & \alpha\in(0,1),\gamma \in (0,\frac{\alpha }{2}).
\end{cases}
\end{displaymath}

On the other hand, using the inequalities (\ref{tsallis}),
(\ref{conv-alpha-1}), (\ref{up-bd-1}), we have
\begin{eqnarray}\label{tsallis-1}
& &|T_{\alpha}(S_n)-T_{\alpha}(G)|\notag\\
&=&|\frac{1}{\alpha-1}\int_{\mathcal {R}}
f_n^{\alpha}(x)-g^{\alpha}(x)dx|\notag\\
&\leq& |\frac{1}{\alpha-1}|\int_{\mathcal
{R}}|f_n^{\alpha}(x)-g^{\alpha}(x)|dx\notag\\
&\leq&\begin{cases}  \frac{2\alpha M_{\alpha-1}\underset{x\in
\mathcal {R}}{\sup}|f_n(x)-g(x)|}{\alpha-1} & \alpha\in(1,+\infty);
\\\frac{M_{2\gamma}\underset{x\in
\mathcal
{R}}{\sup}|p_n(x)-p(x)|^{\alpha-2\gamma}}{2^{-2\gamma-1}(1-\alpha)}
& \alpha\in(0,1),\gamma \in (0,\frac{\alpha }{2}).
\end{cases}\notag\\
\end{eqnarray}
Combining  (\ref{feller}) and  (\ref{tsallis-1}), we obtain
\begin{displaymath}|T_{\alpha}(S_n)-T_{\alpha}(G)|=
\begin{cases} O(n^{-\frac{1}{2}}) & \alpha\in(1,+\infty); \\O(n^{-\frac{\alpha
}{2}+\gamma}) & \alpha\in(0,1),\gamma \in (0,\frac{\alpha }{2}).
\end{cases}
\end{displaymath}

Now we investigate the case $\alpha=1$.

Observe that for any random variable $X$,$$R_1(X)=T_1(X)=H(X).$$ In
what follows we show the following
\begin{equation}\label{rate for 1}|H(S_n)-H(G)|=O(n^{-\frac{1}{2}+\gamma}),\ \ for\ \gamma\in(0,\frac{1}{2}).\end{equation}

For $\gamma\in(0,\frac{1}{2})$, we have
\begin{eqnarray}\label{ineq}
& &n^{\frac{1}{2}-\gamma}|H(S_n)-H(G)|\notag\\
 &=&n^{\frac{1}{2}-\gamma}\left | \int
_\mathcal {R}f_n(x)\mathrm{log}f_n(x)dx-\int
_\mathcal {R}g(x)\mathrm{log}g(x)dx \right |\notag\\
 &=&n^{\frac{1}{2}-\gamma}|\int _\mathcal {R}f_n(x)\log{f_n(x)}-f_n^{\gamma}(x)g^{1-\gamma}(x)\log{f_n(x)}dx\notag\\
 &&+\int_\mathcal {R}f_n^{\gamma}(x)g^{1-\gamma}(x)\log{f_n(x)}-f_n^{1-\gamma}(x)g^{\gamma}(x)\log{g(x)}dx\notag\\
 &&+\int_\mathcal {R}f_n^{1-\gamma}(x)g^{\gamma}(x)\mathrm{log}g(x)-g(x)\mathrm{log}g(x)dx|\notag\\
&\leq&n^{\frac{1}{2}-\gamma}(|J_1(n)|+|J_2(n)|+|J_3(n)|),
\end{eqnarray}
where
$$J_1(n)=\int_\mathcal{R}f_n(x)\log{f_n(x)}-f_n^{\gamma}(x)g^{1-\gamma}(x)\log{f_n(x)}dx,$$
$$J_2(n)=\int_\mathcal{R}f_n^{\gamma}(x)g^{1-\gamma}(x)\log{f_n(x)}-f_n^{1-\gamma}(x)g^{\gamma}(x)\log{g(x)}dx,$$
$$J_3(n)=\int_\mathcal{R}f_n^{1-\gamma}(x)g^{\gamma}(x)\log
g(x)-g(x)\log g(x)dx.$$

By Lemma \ref{expansions}, for $\varepsilon=\frac{1}{2}$, there
exists an $N_1>0$, such that for each $n>N_1$, we have
\begin{eqnarray*}
& &n^{\frac{1}{2}}|f_n(x)-g(x)|\\
&\leq& \max\{|\frac{\rho}{6}(x^3-3x)g(x)+\frac{1}{2}|,|\frac{\rho}{6}(x^3-3x)g(x)-\frac{1}{2}|\}\\
& :=&m(x)\leq \max_{x\in\mathcal {R}}\{m(x)\}:=C_1.
\end{eqnarray*}
 Therefore,  we have
\begin{eqnarray*}
& &n^{\frac{1}{2}-\gamma}J_1(n) \\
&=&n^{\frac{1}{2}-\gamma}|\int _\mathcal {R}f_n(x)\log{f_n(x)}dx\\
& &-\int_\mathcal {R}f_n^{\gamma}(x)g^{1-\gamma}(x)\log{f_n(x)}dx|\\
&\leq&n^{\frac{1}{2}-\gamma}\int_\mathcal {R}f_n^{\gamma}(x)|f_n^{1-\gamma}(x)-g^{1-\gamma}(x)||\mathrm{log}f_n(x)|dx\\
&\leq&n^{-\frac{\gamma}{2}}\int_\mathcal {R}f_n^{\gamma}(x)n^{\frac{1-\gamma}{2}}|f_n(x)-g(x)|^{1-\gamma}|\mathrm{log}f_n(x)|dx\\
&\leq&\int_\mathcal {R}f_n^{\gamma}(x)(m(x))^{1-\gamma}|\mathrm{log}f_n(x)|dx\\
&\leq&C_1^{1-\gamma}\int_\mathcal
{R}f_n^{\frac{\gamma}{2}}(x)|f_n^{\frac{\gamma}{2}}(x)\mathrm{log}f_n(x)|dx.
\end{eqnarray*}

By Lemma \ref{bounded}, there exists $M_{\frac{\gamma}{2}}>0$ such
that
 $\int_Rf_n^{\frac{\gamma}{2}}(x)dx\leq M_{\frac{\gamma}{2}}$. On the other hand, noting that $f_n(x)\leq1$ uniformly in $n\in\mathbb{N}, x\in\mathcal {R}$ in Remark 2, hence when $n>N_1$, $n^{\gamma-1/2}J_1(n)\leq
 M_{\frac{\gamma}{2}}C_1^{1-\gamma}C_2<+\infty$, where $C_2=\underset{x\in\mathcal {R},n>N_1}{\mathrm{max}}|f_n^{\frac{\gamma}{2}}(x)\mathrm{log}f_n(x)|$.

Using  similar arguments, we can obtain that the there exist $N_2$,
$C_3>0$ such that $$n^{\frac{1}{2}-\gamma}J_3(n) \leq C_3\ \ \ \ for
\ \ n>N_2.$$

According to Lemma \ref{ineqal2}, for every
$\gamma\in(0,\frac{1}{2})$, there exists a $Q_{\gamma}>0$ such that
$Q_{\gamma}|f_n^{1-\gamma}(x)-g^{1-\gamma}(x)|\geq|f_n^{\gamma}(x)g^{1-\gamma}(x)\log{f_n(x)}-g^{\gamma}(x)f_n^{1-\gamma}(x)\log{g(x)}|.$

It follows that
\begin{eqnarray*}
n^{\frac{1}{2}-\gamma}J_2(n)&=&n^{\frac{1}{2}-\gamma}|\int _\mathcal {R}f_n^{\gamma}(x)g^{1-\gamma}(x)\log{f_n(x)}dx\\
& &-\int_\mathcal {R}f_n^{1-\gamma}(x)g^{\gamma}(x)\log{g(x)}dx|\\
&\leq&n^{\frac{1}{2}-\gamma} Q_{\gamma}\int_\mathcal
{R}|f_n^{1-\gamma}(x)-g^{1-\gamma}(x)|dx.
\end{eqnarray*}
Combining inequalities (\ref{conv-alpha-1}), (\ref{up-bd-1}) and
equality (\ref{feller}), we have
$$\int_\mathcal
{R}|f_n^{1-\gamma}(x)-g^{1-\gamma}(x)|dx=O(n^{-\frac{1}{2}+\gamma}).$$
Hence, there exists a $C_4>0$ such that
$n^{\frac{1}{2}-\gamma}J_2(n)\leq C_4$ for $n>N_3$.

From the above discussion and inequality (\ref{ineq}), there exists
constant $C:=M_{\frac{\gamma}{2}}C_1C_2+C_3+C_4$ such that
$$n^{\frac{1}{2}-\gamma}\left | \int_\mathcal
{R}f_n(x)\mathrm{log}f_n(x)dx-\int_\mathcal
{R}g(x)\mathrm{log}g(x)dx\right |\leq C,$$  for $n>\max\{N_1, N_2,
N_3\}$.

Hence, (\ref{rate for 1}) is true.

\end{proof}

\vspace{0.2cm}

\section{conclusion and discussion}

We show the convergence of the normalized sum of IID continuous
random variables with bounded moments of all order in the sense of
R\'{e}nyi entropy and Tsallis entropy, and obtain sharp rates of
convergence. By using Feller's expansion and detailed analytical
properties of the corresponding densities, we estimate the R\'{e}nyi
entropy and Tsallis entropy directly.  The main difficulty lies in
the case of Shannon entropy, both on the convergence and rate of
convergence, because $\alpha=1$ is a singularity of $R_{\alpha}$ and
$T_{\alpha}$. We circumvent it by obtaining some uniform estimations
near $\alpha=1$. Compared with the previous proof for Shannon
entropy, our proof is more direct,  can be generalized to random
vectors in higher dimension, and may be used to consider the
convergence of normalized sum of dependent random variables. The
bounded moment condition we used is weaker than the Poincar\'{e}
constant condition in Shannon case \cite{J}. As a result, the rates
of convergence is slower than $O(\frac{1}{n})$ in Shannon case. It
is interesting to consider the convergence and rates of convergence
for R\'{e}nyi divergence  as R\'{e}nyi raised in \cite{RE}.

\section{Appendix}

{\bf Proof of Lemma \ref{con}.}

\begin{proof}Observing that $$\frac{\partial(1-\int_\mathcal {R}p_n^{t}(x)dx)}{\partial t}=-\int_\mathcal {R}p_n^t(x)\log p_n(x)dx$$ and
$\left (1-\int_\mathcal {R}p_n^{t}(x)dx \right )|_{t=1}=0$, we have
that
\begin{equation}\label{eq}
1-\int_\mathcal {R}p_n^{\alpha}(x)dx=-\int_1^{\alpha}\int_\mathcal
{R}p_n^t(x)\log p_n(x)dxdt.
\end{equation}
From equality (\ref{eq}) we have that
\begin{eqnarray*}
&\quad&T_{\alpha}(Y_n)-H(Y_n)\\
&=&\frac{1-\int_\mathcal {R}p_n^{\alpha}(x)dx}{\alpha-1
}+\int_\mathcal {R}p_n(x)\mathrm{log}p_n(x)dx \\
&=&\int_1^{\alpha}\frac{\int_\mathcal
{R}p_n(x)\mathrm{log}p_n(x)dx-\int_\mathcal {R}p_n^t(x)\log
p_n(x)dx}{\alpha-1}dt\\
&=&\int_{1}^{\alpha }\frac{J_n(t)}{(\alpha-1)}dt,
\end{eqnarray*}
where
$$J_n(t)=\int_\mathcal {R}p_n(x)\mathrm{log}p_n(x)dx-\int_\mathcal
{R}p_n^t(x)\log p_n(x)dx.$$ It follows that
\begin{eqnarray}\label{R_S}
|R_{\alpha}(Y_n)-H(Y_n)| \leq\frac{1}{|1-\alpha |}\int_{1}^{\alpha}
|1-t| \left |\frac{J_n(t)}{1-t}\right |dt.
\end{eqnarray}

Next, we'll prove $\left |J_n(t)/(1-t)\right |$ is bounded uniformly
in $t\in [\frac{3}{4},\frac{3}{2}]$ and $n\in\mathbb{N}$.

Using Lagrange mean value Theorem, for fixed $n$ and $x$, there
exists $\xi_n(x)$ which is between $t$ and $1$ and in
$[\frac{3}{4},\frac{3}{2}]$ such that
$p_n^t(x)-p_n(x)=p_n^{\xi_n(x)}(x)\log p_n(x)(t-1)$.  We have
\begin{eqnarray*}
&\quad& |J_n(t)/(1-t)|\\
%&=&\left|\frac{\int_\mathcal
%{R}p_n(x)\mathrm{log}p_n(x)dx-\int_\mathcal {R}p_n^t(x)\log
%p_n(x)dx}{1-t}\right|\\
&=&\left|\frac{\int_\mathcal
{R}(p_n(x)-p_n^t(x))\mathrm{log}p_n(x)dx}{1-t}\right|\\
&=&|\int_\mathcal {R}p_n^{\xi_n(x)}(x)\mathrm{log}^2p_n(x)dx|\\
&\leq&\int_{p_n(x)\leq1}|p_n^{\xi_n(x)}(x)\mathrm{log}^2p_n(x)|dx\\
& &+\int_{p_n(x)>1}|p_n^{\xi_n(x)}(x)\mathrm{log}^2p_n(x)|dx\\
&\leq&\int_{p_n(x)\leq1}|p_n^{\frac{3}{4}}(x)\mathrm{log}^2p_n(x)|dx\\
& &+\int_{p_n(x)>1}|p_n^{\frac{3}{2}}(x)\mathrm{log}^2p_n(x)|dx\\
&\leq&\int_{\mathcal {R}}p_n^{\frac{1}{4}}(x)|p_n^{\frac{1}{2}}(x)\mathrm{log}^2p_n(x)|dx\\
& &+\int_{\mathcal
{R}}p_n(x)|p_n^{\frac{1}{2}}(x)\mathrm{log}^2p_n(x)|dx\\
&\leq&B(\int_\mathcal {R}p_n^{\frac{1}{4}}(x)dx+1),
\end{eqnarray*}
where $$B=\underset{x\in\mathcal {R \backslash
 A}, n\in\mathbb{N}}{\mathrm{sup}}|p_n^{\frac{1}{2}}(x)\mathrm{log}^2p_n(x)|<\infty$$
%Since $\underset{x\in\mathcal {R \backslash
 %A}, n\in\mathbb{N}}{\mathrm{sup}p_n(x)}\leq M$ in
by condition 2) of Theorem 1. From condition 3) in Theorem 1, we
know that $\int_\mathcal {R}p_n(x)^{\frac{1}{4}}dx\leq
M_{\frac{1}{4}}$. Letting $L=B(1+M_{\frac{1}{4}})$, we obtain that
$|J_n(t)/(1-t)|\leq L$ for $n\in\mathbb{N}$.

Combining with inequality (\ref{R_S}), we have that
\begin{eqnarray*}
|R_{\alpha}(Y_n)-H(Y_n)|&\leq&\frac{1}{|1-\alpha |}
\int_{1}^{\alpha}
|t-1|\left |\frac{J_n(t)}{1-t}\right |dt\\
&\leq&\frac{1}{|1-\alpha |}\left |\int_{1}^{\alpha}
L|t-1|dt\right |\\
&=&L|1-\alpha|.
\end{eqnarray*}

Therefore, for any $\varepsilon
>0$, there exists a $\delta =\min\{\frac{\varepsilon}{L},\frac{1}{4}\}$, such that $|T_{\alpha}(Y_n)-H(Y_n))|<\varepsilon
$ uniformly for all $n\in\mathbb{N}$ and all points $\alpha$
satisfying $|\alpha-1|<\delta$ and $\frac{3}{4}\leq \alpha \leq
\frac{3}{2}$.

The conclusion 2) can be obtained by similar arguments.
\end{proof}

\vspace{0.2cm}

{\bf Proof of Lemma \ref{ineqal2}.}

\begin{proof} We just prove the case: $0\leq y\leq x\leq 1$,
the proof of the case $0\leq x\leq y\leq 1$ is similar.

Suppose that
$d_{\gamma}(x)=x^{\gamma}-(1-2\gamma)x^{\gamma}\log{x}$. Since
$d_{\gamma}(x)$ is continuous in $x\in[0,1]$, for every
$\gamma\in(0,\frac{1}{2})$,  $d_{\gamma}(x)$ is bounded in
$x\in[0,1]$. Thus, for every $\gamma\in(0,\frac{1}{2})$, there
exists a $Q_{\gamma}>0$ such that
$|d_{\gamma}(x)|\leq\frac{Q_{\gamma}}{2}$.

For $0\leq y\leq x\leq 1$, we have $x^{1-\gamma}\geq y^{1-\gamma}$
and
$$
x^{\gamma}y^{1-\gamma}\log{x}-y^{\gamma}x^{1-\gamma}\log{y}\geq
-x^{\gamma}y^{\gamma}(x^{1-2\gamma}-y^{1-2\gamma})\log{x}\geq0.
$$
It follows that
$Q_{\gamma}|x^{1-\gamma}-y^{1-\gamma}|=Q_{\gamma}(x^{1-\gamma}-y^{1-\gamma})$
and
$|x^{\gamma}y^{1-\gamma}\log{x}-y^{\gamma}x^{1-\gamma}\log{y}|=x^{\gamma}y^{1-\gamma}\log{x}-y^{\gamma}x^{1-\gamma}\log{y}$.

Denote
\begin{eqnarray*}
A_{\gamma}(x)&:=&Q_{\gamma}(x^{1-\gamma}-y^{1-\gamma})-(x^{\gamma}y^{1-\gamma}\log{x}-y^{\gamma}x^{1-\gamma}\log{y})\\
&=&x^{1-\gamma}y^{1-\gamma}(Q_{\gamma}y^{\gamma-1}-Q_{\gamma}x^{\gamma-1}\\
& & -x^{2\gamma-1}\log{x}+y^{2\gamma-1}\log{y})\\
&:=&x^{1-\gamma}y^{1-\gamma}F_{\gamma}(x),
\end{eqnarray*}
where
$$F_{\gamma}(x):=Q_{\gamma}y^{\gamma-1}-Q_{\gamma}x^{\gamma-1}-x^{2\gamma-1}\log{x}+y^{2\gamma-1}\log{y}.$$

In what follows, we show that $A_{\gamma}(x) \ge 0$ for $0 \le y \le
x \le 1$, which implies the Lemma is true for  $0 \le y \le x \le
1$.

Obviously, if $y=0$, $A_{\gamma}(x)= Q_{\gamma}x^{1-\gamma}\geq0$.
If $y>0$,  since $F_{\gamma}(y)=0$ and
\begin{eqnarray*}
F'_{\gamma}(x)
&=&(1-\gamma)Q_{\gamma}x^{\gamma-2}+(1-2\gamma)x^{2\gamma-2}\log{x}-x^{2\gamma-2}\\
&=&x^{\gamma-2}\{(1-\gamma)Q_{\gamma}-[x^{\gamma}-(1-2\gamma)x^{\gamma}\log{x}]\}\\
&=&x^{\gamma-2}\{(1-\gamma)Q_{\gamma}-d_{\gamma}(x)]\}\\
&\geq& x^{\gamma-2}(\frac{Q_{\gamma}}{2}-\frac{Q_{\gamma}}{2})=0,
\end{eqnarray*}
we obtain for every $\gamma\in(0,\frac{1}{2})$ and $y>0$,
 $F_{\gamma}(x)\geq0$ when $x\geq y$. As a result, $A_{\gamma}(x)\geq 0$.
 \end{proof}

\vspace{0.2cm}

%Appendix two text goes here.

% use section* for acknowledgement
%\section*{Acknowledgment}

%The authors would like to thank...

% Can use something like this to put references on a page
% by themselves when using endfloat and the captionsoff option.
\ifCLASSOPTIONcaptionsoff
  \newpage
\fi

% trigger a \newpage just before the given reference
% number - used to balance the columns on the last page
% adjust value as needed - may need to be readjusted if
% the document is modified later
%\IEEEtriggeratref{8}
% The "triggered" command can be changed if desired:
%\IEEEtriggercmd{\enlargethispage{-5in}}

% references section

% can use a bibliography generated by BibTeX as a .bbl file
% BibTeX documentation can be easily obtained at:
% http://www.ctan.org/tex-archive/biblio/bibtex/contrib/doc/
% The IEEEtran BibTeX style support page is at:
% http://www.michaelshell.org/tex/ieeetran/bibtex/
%\bibliographystyle{IEEEtran}
% argument is your BibTeX string definitions and bibliography database(s)
%\bibliography{IEEEabrv,../bib/paper}

\begin{thebibliography}{1}


\bibitem{SH}C.E. Shannon and W. W. Weaver, \emph{A mathematical theory of communication.} Urbana, IL: University of Illinois Press, 1949.


\bibitem{LI}Ju.V. Linnik,   "An information theoretic proof of the central limit
theorem with lindeberg conditions," \emph{ Theory Probab. Appl.,}
vol. 4, pp.288-299, 1959.

\bibitem{BA} A. R. Barron, ``Entropy and the central limit theorem", \emph{Ann. Probab}, 14,
pp. 336--342, 1986.

\bibitem{ABBN2} S. Artstein, K. M. Ball, F. Barthe and A. Naor,
"On the rate of convergence in the entropic central limit theorem",
\emph{Probab. Theory Relat. Fields}, vol.129, pp. 381--390, 2004.



\bibitem{JB} O. Johnson and A. Barron,
"Fisher Information inequalities and the central limit theorem,"
\emph{Probab. Theory Relat. Fields,} vol. 129, pp.391--409, 2004.

\bibitem{J}O.Johnson, \emph{In formation theory and the central limit theorem,} Imperical College Press, 2004.

\bibitem{ABBN1} S. Artstein, K. M. Ball, F. Barthe, and A. Naor, ``Solution of Shannon¡¯s
problem on the monotonicity of entropy", \emph{J. Amer. Math. Soc.},
vol. 17, pp. 975-982, 2004.


\bibitem{TV} A.M. Tulino, and  S. Verdu, "Monotonic decrease of the non-Gaussianness
of the sum of independent random variables: a simple proof,"
\emph{IEEE Trans. Information Theory,} vol. 52, pp.4295--4297, 2006.

\bibitem{MB}M.Madiman and A.R. Barron,  "Generalized entropy power
inequalities and monotonicity properties of information," \emph{IEEE
Transactions on Information Theory,} vol. 53, pp.2317-2329, 2007.

\bibitem{RE}A. R\'{e}nyi, "On measures of information and entropy,"
in \emph{Proceedings of the 4th Berkeley Symposium on Mathematics,
Statistics and  Probability,} pp.547-561, 1960.

\bibitem{JV}O.Johnson, and Ch. Vignat, "Some results concerning maximum R\'{e}nyi
entropy distributions," \emph{ Ann. Inst. H. Poincar¨¦ Probab.
Statist.,} vol. 43, pp.339-351, 2007.

\bibitem{LYG}E. Lutwak,  D. Yang and  G. Zhang,  "Cramer--Rao and moment-entropy
inequalities for R\'{e}nyi  entropy and generalized Fisher
information,"\emph{ IEEE Trans. Inform. Theory,} vol.51, pp.473-478,
2005.

\bibitem{ASW} G. Aubrun, S. Szarek and E. Werner, "Nonadditivity of R\'{e}nyi entropy and Dvoretzky¡¯s theorem",
\emph{ J. Math. Phys.}, Vol.51, 022102, 2010.

\bibitem{LPS}N. Leonenko,  L. Pronzato, and V. Savani, "A class of
R\'{e}nyi information estimators for multidimensional densities."
\emph{Ann. Statist.,} vol. 36, pp.2153-2182, 2008.


\bibitem{T} C. Tsallis,   "Possible generalization of Boltzmann-Gibbs statistics,"  \emph{Journal of Statistical Physics}, Vol. 52,
pp. 479-487, 1988.

\bibitem{HC} J. Havrda and  F. Ch\'{a}rvat,   "Quantification method of classification processes", \emph{Kiberbetika, Cislo}, Vol. 1,
pp. 30-34, 1967.

\bibitem{NN}F. Nielsen and R. Nock,   "On R\'{e}nyi and Tsallis
entropies and divergences for exponential families", 1105.3259,
2011.

\bibitem{FE}W. Feller,  \emph{An introduction to probability theory and its
applications Vol. II,} John Wiley Sons, 1966.

\bibitem{CD} H. F. Cui and Y. M. Ding, "The convergence of  R\'{e}nyi  entropy of the normalized sums of IID random variables",
\emph{Statistical Probability Letter}, vol.80, pp.1167--1173, 2010.

\bibitem{TMSA} A. Teixeira, A. Matos, A. Souto  and L. Antunes, "Entropy measures vs. Kolmogorov
complexity", \emph{Entropy}, Vol.13, pp.595-611,2011.




\bibitem{P}V.V. Petrov, \emph{ Limit Theorems of Probability Theory: Sequences of independent random
variables.}Oxford University Press, pp.62--62, 1995.














\end{thebibliography}
%
% <OR> manually copy in the resultant .bbl file
% set second argument of \begin to the number of references
% (used to reserve space for the reference number labels box)

\vspace{0.4cm}

% biography section
%
% If you have an EPS/PDF photo (graphicx package needed) extra braces are
% needed around the contents of the optional argument to biography to prevent
% the LaTeX parser from getting confused when it sees the complicated
% \includegraphics command within an optional argument. (You could create
% your own custom macro containing the \includegraphics command to make things
% simpler here.)
%\begin{biography}[{\includegraphics[width=1in,height=1.25in,clip,keepaspectratio]{mshell}}]{Michael Shell}
% or if you just want to reserve a space for a photo:

%\begin{IEEEbiography}{Michael Shell}
%Biography text here.
%\end{IEEEbiography}

% if you will not have a photo at all:
%\begin{IEEEbiographynophoto}{John Doe}
%Biography text here.
%\end{IEEEbiographynophoto}

% insert where needed to balance the two columns on the last page with
% biographies
%\newpage

%\begin{IEEEbiographynophoto}{Jane Doe}
%Biography text here.
%\end{IEEEbiographynophoto}

% You can push biographies down or up by placing
% a \vfill before or after them. The appropriate
% use of \vfill depends on what kind of text is
% on the last page and whether or not the columns
% are being equalized.

%\vfill

% Can be used to pull up biographies so that the bottom of the last one
% is flush with the other column.
%\enlargethispage{-5in}

% that's all folks
\end{document}